\newtheorem{theorem}{Theorem}
\newtheorem{algorithm}{Algorithm}
\newtheorem{example}{Example}
\newtheorem{corollary}{Corollary}
\newtheorem{definition}{Definition}
\newtheorem{lemma}{Lemma}
\definecolor{darkgreen}{rgb}{0.0,0.5,0.0}
\newcommand{\mds}{MDS\xspace}
\newcommand{\ds}{DS\xspace}
\newcommand{\smp}{SMP\xspace}
\newtheorem{examplesmpcont}{Example \smp continuation}
\title{Fully Lattice Linear Algorithms\thanks{Extended version of the paper with the same title to appear in The 24th International Symposium on Stabilization, Safety, and Security of Distributed Systems (SSS 2022).}}
\author{Arya Tanmay Gupta and Sandeep S Kulkarni\\\texttt{\{atgupta,sandeep\}@msu.edu}
}
\date{Computer Science and Engineering, Michigan State University
}
\begin{document}

\maketitle

\begin{abstract}

This paper focuses on analyzing and differentiating between lattice linear problems and algorithms. It introduces a new class of algorithms called \textit{(fully) lattice linear algorithms}. A property of these algorithms is that they induce a partial order among all states and form \textit{multiple lattices}. An initial state locks in one of these lattices. 
We present a lattice linear self-stabilizing algorithm for minimal dominating set.

\end{abstract}

\textbf{\textit{Keywords}}: self-stabilization, lattice linear problems, lattice linear algorithms, minimal dominating set, convergence time.

\section{Introduction}

A multiprocessing system involves several processes running concurrently. These systems can provide a substantially larger computing power over a single processor. However, increased parallelism requires increased coordination, thereby increasing the execution time. 

The notion of detecting predicates to represent problems which induce partial order among the global states (lattice linear problems) was introduced in \cite{Garg2020}. If the states form a partial order, then the nodes can be allowed to read old data and execute asynchronously. In \cite{Gupta2021}, we introduced \textit{eventually lattice linear algorithms}, for problems where states do not naturally form a partial order (non-lattice linear problems), which induce a partial order among the feasible states.
In this paper, we differentiate between the partial orders in lattice linear problems and those induced by lattice linear algorithms in non-lattice linear problems.

The paper is organized as follows. In \Cref{section:preliminaries}, we elaborate the preliminaries and some background on lattice linearity. In \Cref{section:ll-algos}, we present a fully lattice linear algorithm for minimal dominating set. In \Cref{section:convergence-time}, we study the convergence time of algorithms traversing a lattice of states. We discuss related work in \Cref{section:literature} and conclude in \Cref{section:conclusion}. 

\section{Preliminaries and Background}\label{section:preliminaries}

In this paper, we are mainly interested in graph algorithms where the input is a graph $G$, $V(G)$ is the set of its nodes and $E(G)$ is the set of its edges. For a node $i\in V(G)$, $Adj_i$ is the set of nodes connected to $i$ by an edge, and $Adj^x_i$ are the set of nodes within $x$ hops from $i$, excluding $i$.

Each node in $V(G)$ stores a set of variables, which represent its \textit{local state}. A \textit{global state} is obtained by assigning each variable of each node a value from its respective domain. We use $S$ to denote the set of all possible global states. A global state $s\in S$ is represented as a vector, where $s[i]$ itself is a vector of the variables of node $i$.

Each node in $V(G)$ is associated with actions. Each action at node $i$ checks the values of nodes in $Adj_i^x\cup \{i\}$ (where the value of $x$ is problem dependent) and updates its own variables. A \textit{move} is an event in which some node $i$ updates its variables based on the variables of nodes in $Adj_i^x\cup \{i\}$.


$S$ is a \textit{lattice linear state space} if its states form a lattice. The nature of the partial order, present among the states in $S$ which makes it \textit{lattice linear}, is elaborated as follows. 
Local states are totally ordered and global states are partially ordered. 
We use '$<$' to represent both these orders. For a pair of global states $s$ and $s'$, $s<s'$ iff $(\forall i:(s[i]<s'[i]\lor s[i]=s'[i]))\land(\exists i:(s[i]<s'[i]))$.
We use the symbol `$>$' which is the opposite of `$<$', i.e. $s>s'$ iff $s' < s$.
In the lattice linear problems in \cite{Garg2020}, $s$ transitions to $s'$ where $s < s'$. 

\begin{definition}\label{definition:<-lattice}
    \textbf{{\boldmath$<$}-\textit{lattice}}. 
    The states in $S$ form $<$-lattice iff $\exists$ a relation $<$ s.t. 
    
    (1) if $(s, a, s') \in \delta$ then (a)  $\exists j:((s[j]<s^\prime[j])\land(\forall k.k\neq j: s[k]=s^\prime[k]))$, and (b) $s'$ is the closest state to $s$, i.e., $\not \exists b, s'' \  (s'' < s' \land (s, b, s'') \in \delta)$, and 
    
    (2) $\forall s$ such that $s$ is not a supremum, $\exists a, s': (s, a, s') \in \delta$.
    
    \noindent $S$ is a lattice linear state space iff its states $<$-lattice.
\end{definition}

\noindent\textbf{\textit{Remark}}: The definition of $<$-lattice, and the structure of lattices that we present in this paper, is strict, as it provides the underlying infrastructure for a traversing algorithm. Here, $s'$ is a parent of $s$ iff only one node takes an action, and that action should be of minimum possible magnitude. From the perspective of a distributed system, if $s$ is forbidden, then there will be some forbidden nodes in $s$. If only one forbidden node takes action, then the resulting state $s'$ is a parent of $s$, otherwise if more than one nodes take action, then the resulting state $s'$ is not a parent of $s$, but is reachable from $s$ through the lattice. In addition, a node can choose to take an action which is not of the minimum possible magnitude, this depends on the states that it reads from other nodes.

\section{Background}\label{section:background}

Certain problems can be represented by a predicate $\mathcal{P}$ such that for any node $i$, if $i$ is violating $\mathcal{P}$, then it must change its state, or else the system will not satisfy $\mathcal{P}$. If $i$ is violating $\mathcal{P}$ in some state $s$, then it is forbidden in $s$. Formally,

\begin{definition}\label{definition:forbidden-node}\cite{Garg2020}\small  $\textsc{Forbidden}(i,s,\mathcal{P})\equiv \lnot \mathcal{P}(s)\land(\forall s'>s:s'[i]=s[i]\implies\lnot \mathcal{P}(s'))$.
\end{definition}

The predicate $\mathcal{P}$ is \textit{lattice linear} with respect to the lattice induced in $S$ iff $s$ not being optimal implies that there is some forbidden node in $s$. Formally,

\begin{definition}\label{definition:llp}\cite{Garg2020}\textbf{Lattice Linear Predicate {\boldmath$\mathcal{P}$}.}
    $\forall s\in S:\textsc{Forbidden}(s,\mathcal{P})\implies \exists i:\textsc{Forbidden}(i,s,\mathcal{P})$.
\end{definition}

A problem $P$ is a \textit{lattice linear problem} iff it can be represented by a lattice linear predicate. Otherwise, $P$ is a \textit{non-lattice linear problem}. In this section, we discuss some background results on lattice linear problems.

We have from the above discussion that if the system is in a state $s$ and some node $i$ is forbidden in $s$, it is essential for $i$ to change its state in order for the system to reach an optimal state. If no node is forbidden in $s$, then $\mathcal{P}(s)$ holds true. The lowest state in the lattice where $\mathcal{P}(s)$ is true is optimal.

\noindent\textbf{\textit{Remark:}} It is not necessary that a $<$-lattice is of finite size, in which case, there will be no supremum. But in the case of lattice linear problems a $<$-lattice will always have an infimum. 

\begin{example}\label{example:mom-definition}\textit{SMP}.
    We describe a lattice linear problem, the \textit{stable (man-optimal) marriage problem} (\smp) from \cite{Garg2020}. In \smp, all men (respectively, women) rank women (respectively men) in terms of their preference (lower rank is preferred more).
    A global state is represented as a vector $s$ where the vector $s[i]$ represents the (local) state of man $i$. $s[i]$ contains a single value which represents the ID of the woman $w$ that man $i$ is proposing to.
    
    The requirements of SMP can be defined as $\mathcal{P}_{\smp}\equiv \forall m,m',m\neq m': s[m]\neq s[m']$. $\mathcal{P}_{\smp}$ is true iff no two men are proposing to the same woman. In addition, man $m$ is forbidden in state $s$ iff there exists $m'$ such that $m$ and $m'$ are proposing to the same woman $w$ and $w$ prefers $m'$ over $m$. Thus, \\
    {
    \small 
    $\textsc{Forbidden}_{\smp}(m,s,\mathcal{P}_{\smp})\equiv \exists m': s[m]=s[m']\land rank(s[m],m')<rank(s[m],m).$
    \centering
    }
    If $m$ is forbidden, he increments $s[m]$ by 1 until all his choices are exhausted.
    \qed 
    
\end{example}

A key observation from the stable (man-optimal) marriage problem (\smp) and other problems from \cite{Garg2020} is that 
they exhibit global infimum state $\ell$ and \textit{possibly} a global supremum state $u$ i.e. $\ell$  and $u$ are the states such that $\forall s\in S, \ell\leq s$ and $\forall s\in S, u \geq s$.
All the states in $S$ form a single lattice. 

\begin{examplesmpcont}
    As an illustration of \smp, consider the case where we have 3 men $m_1,m_2,m_3$ and 3 women $w_1,w_2,w_3$. 
    The lattice induced in this case is shown in \Cref{figure:smplattice}. In this figure every vector represents the global state $s$ such that $s[i]$ represents the index of woman that $m_i$ is proposing to $\forall i\in \{1,2,3\}$. 
    The acting algorithm begins in the state $(1, 1, 1)$ (i.e., each man starts with his first choice) and continues its execution in this lattice. 
    The algorithm terminates in the lowest state in the lattice where no node is forbidden.
    \qed
    
    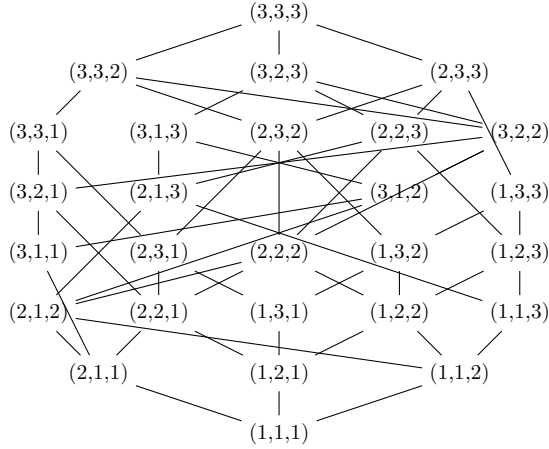
\begin{figure}[ht]
        \centering 
        \begin{tikzpicture}[scale=.8,every node/.style={scale=.8}]
            \node at (0,0) (a) {(1,1,1)};
            
            \node at (-3,1) (b) {(2,1,1)};
            \node at (0,1) (c) {(1,2,1)};
            \node at (3,1) (d) {(1,1,2)};
            
            \node at (4,2) (e) {(1,1,3)};
            \node at (2,2) (f) {(1,2,2)};
            \node at (4,3) (g) {(1,2,3)};
            
            \node at (0,2) (h) {(1,3,1)};
            \node at (2,3) (i) {(1,3,2)};
            \node at (4,4) (j) {(1,3,3)};
            
            \node at (-4,2) (k) {(2,1,2)};
            \node at (-2,4) (l) {(2,1,3)};
            
            \node at (-2,2) (m) {(2,2,1)};
            \node at (0,3) (n) {(2,2,2)};
            \node at (2,5) (o) {(2,2,3)};
            
            \node at (-2,3) (p) {(2,3,1)};
            \node at (0,5) (q) {(2,3,2)};
            \node at (3,6) (r) {(2,3,3)};
            
            \node at (-4,3) (s) {(3,1,1)};
            \node at (2,4) (t) {(3,1,2)};
            \node at (-2,5) (u) {(3,1,3)};
            
            \node at (-4,4) (v) {(3,2,1)};
            \node at (4,5) (w) {(3,2,2)};
            \node at (0,6) (x) {(3,2,3)};
            
            \node at (-4,5) (y) {(3,3,1)};
            \node at (-3,6) (z) {(3,3,2)};
            \node at (0,7) (parent) {(3,3,3)};
            
            \draw (a) -- (b);
            \draw (a) -- (c);
            \draw (a) -- (d);
            \draw (d) -- (e);
            \draw (d) -- (f);
            \draw (c) -- (f);
            \draw (e) -- (g);
            \draw (f) -- (g);
            \draw (c) -- (h);
            \draw (f) -- (i);
            \draw (h) -- (i);
            \draw (g) -- (j);
            \draw (i) -- (j);
            \draw (b) -- (k);
            \draw (d) -- (k);
            \draw (e) -- (l);
            \draw (k) -- (l);
            \draw (b) -- (m);
            \draw (c) -- (m);
            \draw (m) -- (n);
            \draw (n) -- (o);
            \draw (g) -- (o);
            \draw (l) -- (o);
            \draw (f) -- (n);
            \draw (k) -- (n);
            \draw (h) -- (p);
            \draw (m) -- (p);
            \draw (p) -- (q);
            \draw (i) -- (q);
            \draw (n) -- (q);
            \draw (o) -- (r);
            \draw (j) -- (r);
            \draw (q) -- (r);
            \draw (b) -- (s);
            \draw (k) -- (t);
            \draw (s) -- (t);
            \draw (t) -- (u);
            \draw (l) -- (u);
            \draw (s) -- (v);
            \draw (m) -- (v);
            \draw (n) -- (w);
            \draw (t) -- (w);
            \draw (v) -- (w);
            \draw (w) -- (x);
            \draw (u) -- (x);
            \draw (o) -- (x);
            \draw (p) -- (y);
            \draw (v) -- (y);
            \draw (y) -- (z);
            \draw (w) -- (z);
            \draw (q) -- (z);
            \draw (z) -- (parent);
            \draw (r) -- (parent);
            \draw (x) -- (parent);
        \end{tikzpicture}
        \caption{Lattice for \smp with 3 men and 3 women; $\ell=(1,1,1)$ and $u=(3,3,3)$.}
        \label{figure:smplattice}
    \end{figure}
\end{examplesmpcont}

\noindent\textbf{\textit{Remark}}: The lattices in \Cref{figure:smplattice} present the skeleton of the partial order among the states. Here, in a non-optimal state, if only one forbidden node moves, then the resulting state $s'$ is a parent of $s$. If more than one nodes move, then the resulting state $s'$ is not a parent of $s$, but is reachable from $s$ through the lattice.

In \smp and other problems in \cite{Garg2020}, the algorithm needs to be initialized to $\ell$ to reach an optimal solution.
If we start from a state $s$, $s \neq \ell$, then the algorithm can only traverse the lattice from $s$. Hence, upon termination, it is possible that the optimal solution is not reached. 
In other words, these algorithms cannot be self-stabilizing \cite{Dijkstra1974} unless $u$ is the optimal state. 

\begin{examplesmpcont}
    Consider that men and women are $M=(A,J,T)$ and $W=(K,Z,M)$ indexed in that sequence respectively. Let that proposal preferences of men are $A=(Z,K,M)$, $J=(Z,K,M)$ and $T=(K,M,Z)$, and women have ranked men as $Z=(A,J,T)$, $K=(J,T,A)$ and $M=(T,J,A)$. The optimal state (starting from (1,1,1)) is (1,2,2).
    Starting from (1,2,3), the algorithm terminates at (1,2,3) which is not optimal.
    Starting from (3,1,2), the algorithm terminates declaring that no solution is available.
    \qed 
\end{examplesmpcont}

\section{Lattice Linear \textit{Algorithms}: Minimal Dominating Set}\label{section:ll-algos}

In non-lattice linear problems such as minimal dominating set (\mds), the states do not form a partial order naturally, as for a given non-optimal state, it cannot be determined that which nodes are forbidden.

We introduce the class of fully lattice linear algorithms,
which partition the state space into subsets $S_1, S_2, \cdots,S_w (w\geq 1)$, where each subset forms a lattice. The initial state locks the system into one of these lattices and algorithm executes until an optimal state is reached. The optimal state is always the supremum of that lattice. In this section, we describe a fully lattice linear algorithm for \mds. 


\begin{definition}\label{definition:mds}\textbf{Minimal dominating set.}
    In the \mds problem, the task is to choose a minimal set of nodes $\mathcal{D}$ in a given graph $G$ such that for every node in $V(G)$, either it is present in $\mathcal{D}$, or at least one of its neighbours is in $\mathcal{D}$. Each node $i$ stores a variable $st.i$ with domain $\{IN, OUT\}$; $i\in\mathcal{D}$ iff $st.i=IN$.
\end{definition}

\noindent We describe the algorithm as \Cref{algorithm:ds-ll}.

\begin{algorithm}\label{algorithm:ds-ll}Algorithm for \mds.
    \begin{center}
    \small
    \begin{tabular}{|l|}
        \hline 
        \textsc{Removable-DS}$(i)$ $\equiv st.i=IN\land (\forall j\in Adj_i\cup\{i\}:((j\neq i\land st.j=IN)\lor$\\\quad\quad\quad\quad\quad\quad\quad\quad $(\exists k\in Adj_j, k\neq i: st.k=IN)))$.\\
        \textsc{Addable-DS}$(i)$ $\equiv st.i=OUT\land(\forall j\in Adj_i,st.j=OUT)$.\\
        \textsc{Unsatisfied-DS}$(i)$ $\equiv \textsc{Removable-DS}(i)\lor$
        $\textsc{Addable-DS}(i)$.\\
        $\textsc{Forbidden-DS}(i)\equiv\textsc{Unsatisfied-DS}(i)\land (\forall j\in Adj^2_i:$\\\quad\quad\quad\quad\quad\quad\quad\quad$\lnot\textsc{Unsatisfied-DS}(j) \lor id.i>id.j)$.\\
        Rules for node $i$:\\
        $\textsc{Forbidden-DS}(i) \longrightarrow st.i = \lnot st.i$.\\
        \hline 
    \end{tabular}
    \end{center}
\end{algorithm}

To demonstrate that \Cref{algorithm:ds-ll} is lattice linear, we define state value and rank, assumed as imaginary variables associated with the nodes, as follows:
{
\small 
\begin{equation*}
    \textsc{State-Value-DS}(i,s)=
    \begin{cases}
        1 & \text{if $\textsc{Unsatisfied-DS}(i)$ in state $s$} \\
        0 & \text{otherwise}
    \end{cases}
\end{equation*}

\begin{equation*}
    \textsc{Rank-DS}(s)=\sum_{i\in V(G)}\textsc{State-Value-DS}(i,s).
\end{equation*}
}

The lattice is formed with respect to the state value. Specifically, the state value of a node can change from $1$ to $0$ but not vice versa. 
Therefore $\textsc{Rank-DS}$ always decreases until it becomes zero at the supremum. 

\begin{lemma}\label{lemma:ds-no-step-back}
    Any node in an input graph does not revisit its older state while executing under \Cref{algorithm:ds-ll}.
\end{lemma}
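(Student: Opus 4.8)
The plan is to show that the only way a node $i$ changes its state is via the rule $\textsc{Forbidden-DS}(i) \longrightarrow st.i = \lnot st.i$, and that each such move strictly decreases $\textsc{Rank-DS}$ by flipping $\textsc{State-Value-DS}(i,s)$ from $1$ to $0$ without raising the state value of any other node. Since a node's state is a single bit $st.i \in \{IN, OUT\}$, "revisiting an older state" means $st.i$ flips from its current value and later flips back; I will rule this out by a monotonicity argument on the imaginary variable $\textsc{State-Value-DS}(i,\cdot)$.

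First I would observe that when $i$ executes, $\textsc{Forbidden-DS}(i)$ holds, hence $\textsc{Unsatisfied-DS}(i)$ holds, so $\textsc{State-Value-DS}(i,s)=1$ before the move. I then need the key claim: immediately after $i$ flips $st.i$, we have $\lnot\textsc{Unsatisfied-DS}(i)$, i.e., $i$ is neither \textsc{Removable-DS} nor \textsc{Addable-DS} in the new state. This is a case analysis: if $i$ was \textsc{Removable-DS} (so $st.i=IN$ and every neighbor is dominated without $i$), then after setting $st.i=OUT$, $i$ is \textsc{Addable-DS} only if all its neighbors are $OUT$ — but \textsc{Removable-DS}$(i)$ required each neighbor $j$ to be $IN$ or to have an $IN$-neighbor $k \neq i$; I must check this forces $i$ to still be dominated, which needs care about the $i$ itself and about whether $i$ having just left could leave some neighbor $j$ with no $IN$-neighbor. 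Symmetrically, if $i$ was \textsc{Addable-DS} (all neighbors $OUT$, $st.i = OUT$), then after setting $st.i = IN$, $i$ is trivially not \textsc{Addable-DS}, and it is \textsc{Removable-DS} only if it could be safely removed again, which contradicts the situation that put it $IN$. The main obstacle is precisely this local bookkeeping: verifying that one node's flip does not create a situation that makes that same node unsatisfied again in the resulting state, and in particular handling the interaction between the $\textsc{Removable-DS}$ condition's clause "$\exists k \in Adj_j, k\neq i: st.k = IN$" and the node $i$ being the one that changed.

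Next I would argue that a node, once its state value is $0$, never returns to $1$ by its own action: a node only acts when $\textsc{Forbidden-DS}$ (hence $\textsc{Unsatisfied-DS}$, hence state value $1$) holds, so a node with state value $0$ simply does not move. Combined with the previous paragraph — each move takes the acting node from state value $1$ to state value $0$ — every node's state value is non-increasing over the entire execution, so it changes at most once, from $1$ to $0$. Since $st.i$ has only two possible values and the state value pins down which transitions are permitted, a node performs at most one flip of $st.i$ and therefore never revisits an older local state. I would close by noting this argument is independent of the interleaving (it concerns each node's own action sequence), so it holds under the asynchronous/old-information model described in the preliminaries, and remark that whether $\textsc{Rank-DS}$ itself is monotone across concurrent moves (as opposed to per-node monotonicity of state value) would need the lattice-linearity structure, but is not required for this lemma.
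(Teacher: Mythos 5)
There is a genuine gap in the second half of your argument. You correctly verify that a move by $i$ itself takes $\textsc{State-Value-DS}(i,\cdot)$ from $1$ to $0$ (the local bookkeeping you flag does work out: an addable $i$ that moves $IN$ had all neighbours $OUT$, so it cannot be removable afterwards; a removable $i$ that moves $OUT$ retains an $IN$ neighbour, so it cannot be addable afterwards). But you then conclude that each node's state value is non-increasing over the whole execution from the observation that a node with state value $0$ ``simply does not move.'' That inference is a non sequitur: $\textsc{Unsatisfied-DS}(i)$ --- and hence $\textsc{State-Value-DS}(i,\cdot)$ --- is not a function of $st.i$ alone; it reads the states of nodes up to two hops from $i$. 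A move by another node can therefore raise $i$'s state value from $0$ back to $1$ without $i$ moving at all. The canonical danger is exactly the one your closing remark waves away as not required: $i$ is $OUT$ and dominated only by $j\in Adj_i$; if $j$ becomes removable and moves $OUT$, then $i$ becomes addable, becomes forbidden, and must flip to $IN$ --- and interference of this kind is what could in principle force a node to revisit an earlier state.

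The paper's proof spends essentially all of its effort on this point. It first notes that the clause $\forall j\in Adj^2_i:\lnot\textsc{Unsatisfied-DS}(j)\lor id.i>id.j$ in $\textsc{Forbidden-DS}$ ensures that when $i$ is forbidden, no node in $Adj^2_i$ moves concurrently, and then argues case by case that subsequent moves of other nodes cannot re-unsatisfy $i$: after an addable $i$ moves $IN$, its neighbours are no longer addable and hence never move $IN$, so $i$ never becomes removable; and a removable $j$ adjacent to a dominated, $OUT$ node $i$ only moves $OUT$ when $i$ is covered by some node other than $j$, so $i$ never becomes addable. Your proposal omits this interference analysis entirely, and without it the per-node monotonicity on which your conclusion rests is unproven.
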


\begin{proof}
    In \Cref{algorithm:ds-ll}, if a node $i$ is forbidden, then no node in $Adj^2_i$ moves.
    
    If $i$ is forbidden and addable at time $t$, then any other node in $Adj_i$ is out of the \ds. When $i$ moves in, then any other node in $Adj_i$ is no longer addable, so they do not move in after $t$. As a result $i$ does not have to move out after moving in. Similarly, a forbidden and removable $i$ does not move in after moving out.
    
    Let that $i$ is dominated and out, and $j\in Adj_i$ is removable forbidden. $j$ will move out only if $i$ is being covered by another node. Also, while $j$ turns out of the \ds, no other node in $Adj^2_j$, and consequently in $Adj_i$, changes its state. As a result $i$ does not have to turn itself in because of the action of $j$.
    
    From the above cases, we have that $i$ does not change its state to $st.i$ after changing its state from $st.i$ to $st'.i$ throughout the execution of \Cref{algorithm:ds-ll}.
\end{proof}

\begin{theorem}\label{theorem:ds-ll}
    \Cref{algorithm:ds-ll} is self-stabilizing and (fully) lattice linear.
\end{theorem}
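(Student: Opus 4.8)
The plan is to prove two things: (1) Algorithm~\ref{algorithm:ds-ll} is self-stabilizing, meaning from any initial state it eventually reaches a state where no node is forbidden, and that state is a minimal dominating set; and (2) the algorithm is fully lattice linear, meaning the reachable states from any initial state form a lattice (in the sense of Definition~\ref{definition:<-lattice}) with the optimal state as its supremum.

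For the self-stabilization part, I would first argue \emph{closure}: when \textsc{Forbidden-DS}$(i)$ is false for all $i$, then for every node $i$, $\lnot\textsc{Removable-DS}(i)$ and $\lnot\textsc{Addable-DS}(i)$ hold, so $\mathcal{D}=\{i:st.i=IN\}$ is dominating (no $\textsc{Addable-DS}$ node means every out-node has an in-neighbour) and minimal (no $\textsc{Removable-DS}$ node means no in-node can be dropped). For \emph{convergence}, I would invoke the $\textsc{Rank-DS}$ function: by Lemma~\ref{lemma:ds-no-step-back}, no node ever returns to a previously held $st$ value, so each node flips at most once (its domain has only two values), which bounds the number of moves by $|V(G)|$ and forces termination at a state with no forbidden node. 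I should also note the algorithm is well-defined under asynchrony: since $\textsc{Forbidden-DS}(i)$ requires that no node in $Adj^2_i$ is simultaneously unsatisfied with smaller id, two nodes within distance $2$ never move concurrently, so reading stale values does not cause spurious moves — this is exactly what lattice linearity buys us.

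For the lattice-linearity part, I would use the imaginary variables $\textsc{State-Value-DS}(i,s)$, totally ordered by $1 > 0$ (so that "better" is smaller state-value, i.e.\ the partial order points toward the satisfied state). The key claim, already asserted in the text preceding the theorem and essentially justified by Lemma~\ref{lemma:ds-no-step-back}, is that along any execution each node's state value only decreases from $1$ to $0$, never back. I would fix an initial state $s_0$ and let $S_{s_0}$ be the set of states reachable from $s_0$; define the order on $S_{s_0}$ via the state-value vectors. Then I check the two conditions of Definition~\ref{definition:<-lattice}: condition~(1) that a single forbidden move produces a state that is a closest successor (the moving node's state value drops by exactly $1$, and since \textsc{State-Value-DS} is $\{0,1\}$-valued this is the minimum possible magnitude, and Lemma~\ref{lemma:ds-no-step-back} ensures no other node's value changes as a side effect); and condition~(2) that every non-supremum state has a forbidden node — this holds because if $\textsc{Rank-DS}(s)>0$ then some node is unsatisfied, and among all unsatisfied nodes in any $Adj^2$-neighbourhood the one with the largest id is forbidden, so a forbidden node always exists when $s$ is not the all-zero-state-value supremum. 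Finally, the supremum of $S_{s_0}$ is the unique state with $\textsc{Rank-DS}=0$ reachable from $s_0$, which is the optimal (minimal dominating set) state the algorithm terminates in.

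\textbf{The main obstacle} I anticipate is establishing cleanly that $S_{s_0}$ actually forms a \emph{lattice} (closed under meets and joins of the reachable states) rather than merely a partial order with a maximum — and, relatedly, pinning down that the supremum is reached \emph{regardless of the interleaving/asynchrony}, i.e.\ that different schedules from $s_0$ all converge to the \emph{same} minimal dominating set. This is where Lemma~\ref{lemma:ds-no-step-back} does the heavy lifting: it shows the "direction" each node moves is predetermined by $s_0$ (a node that is ever removable-forbidden is destined to go out, one that is ever addable-forbidden is destined to go in, and these fates don't interfere), so the join of any two reachable state-value vectors is again reachable and the whole structure is confluent. I would spend most of the care here, arguing that the set of nodes that flip, and the target configuration, is a function of $s_0$ alone, and then the lattice structure and the "supremum = optimal" claim follow. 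The closure and bounded-convergence arguments are comparatively routine.
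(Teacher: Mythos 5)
Your proposal follows essentially the same route as the paper's proof: both rest on Lemma~\ref{lemma:ds-no-step-back} to rule out repeated states (hence convergence in at most $n$ moves), and both use $\textsc{State-Value-DS}$/$\textsc{Rank-DS}$ to define the partial order whose supremum is the terminal state. You are in fact somewhat more explicit than the paper on the points it leaves implicit --- closure (a state with no forbidden node is a minimal dominating set), progress (the maximum-id unsatisfied node is always forbidden), and the confluence needed for the reachable states to form a genuine lattice --- which is to your credit, since the paper's own argument asserts rather than proves these.
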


\begin{proof}
    From \Cref{lemma:ds-no-step-back}, if $G$ is in state 
    $s$ and $\textsc{Rank-DS}(s)$ is non-zero, then at least one node is forbidden in $s$, so $\textsc{Rank-DS}$ decreases monotonously until it becomes zero. For any node $i$, we have that its state value decreases whenever $i$ is forbidden and never increases. Thus \Cref{algorithm:ds-ll} is self-stabilizing.
    
    We have a partial order among the states, where if the rank of a state $s$ is nonzero, then it transitions to a state $s'$ such that $s<s'$ where for some $i$ forbidden in $s$, $s[i]<s'[i]$. Here, $s<s'$ iff $\textsc{Rank-DS}(s)>\textsc{Rank-DS}(i)$.
    This 
    shows that \Cref{algorithm:ds-ll} is lattice linear.
\end{proof}

\begin{example}\label{example:fullylatticelinear}
    Let $G_4$ be a graph where $V(G_4)=\{v_1,v_2,v_3,v_4\}$ and $E(G_4)=\{\{v_1$, $v_2\}, \{v_3$, $v_4\}\}$. For $G_4$ the lattices induced under \Cref{algorithm:ds-ll} are shown in \Cref{figure:full-lattices-from-ds-example}; each vector represents a global state $(st.v_1, st.v_2, st.v_3,st.v_4)$. 
    \qed 
\end{example}
\begin{figure}[ht]
    \centering
    \subfigure[]{
        \begin{tikzpicture}[scale=.8,every node/.style={scale=.6}]
            \node at (0,0) (a1) {\begin{tabular}{c} (IN,OUT,\\ IN,OUT) \end{tabular}};
            \node at (-1.5,-1) (a2) {\begin{tabular}{c} (IN,OUT, \\ IN,IN) \end{tabular}};
            \node at (1.5,-1) (a3) {\begin{tabular}{c} (IN,IN, \\ IN,OUT) \end{tabular}};
            \node at (0,-2) (a4) {\begin{tabular}{c} (IN,IN, \\ IN,IN) \end{tabular}};
            \draw (a1) -- (a2);
            \draw (a1) -- (a3);
            \draw (a2) -- (a4);
            \draw (a3) -- (a4);
        \end{tikzpicture}
    }
    \subfigure[]{
        \begin{tikzpicture}[scale=.8,every node/.style={scale=.6}]
            \node at (0,0) (a1) {\begin{tabular}{c} (OUT,IN, \\ OUT,IN) \end{tabular}};
            \node at (-1.5,-1) (a2) {\begin{tabular}{c} (OUT,IN, \\ OUT,OUT) \end{tabular}};
            \node at (1.5,-1) (a3) {\begin{tabular}{c} (OUT,OUT, \\ OUT,IN) \end{tabular}};
            \node at (0,-2) (a4) {\begin{tabular}{c} (OUT,OUT, \\ OUT,OUT) \end{tabular}};
            \draw (a1) -- (a2);
            \draw (a1) -- (a3);
            \draw (a2) -- (a4);
            \draw (a3) -- (a4);
        \end{tikzpicture}
    }
    \subfigure[]{
        \begin{tikzpicture}[scale=.8,every node/.style={scale=.6}]
            \node at (0,0) (a1) {\begin{tabular}{c} (OUT,IN, \\ IN,OUT) \end{tabular}};
            \node at (-1.5,-1) (a2) {\begin{tabular}{c} (OUT,IN, \\ IN,IN) \end{tabular}};
            \node at (1.5,-1) (a3) {\begin{tabular}{c} (OUT,OUT, \\ IN,OUT) \end{tabular}};
            \node at (0,-2) (a4) {\begin{tabular}{c} (OUT,OUT, \\ IN,IN) \end{tabular}};
            \draw (a1) -- (a2);
            \draw (a1) -- (a3);
            \draw (a2) -- (a4);
            \draw (a3) -- (a4);
        \end{tikzpicture}
    }
    \subfigure[]{
        \begin{tikzpicture}[scale=.8,every node/.style={scale=.6}]
            \node at (0,0) (a1) {\begin{tabular}{c} (IN,OUT, \\ OUT,IN) \end{tabular}};
            \node at (-1.5,-1) (a2) {\begin{tabular}{c} (IN,IN, \\ OUT,IN) \end{tabular}};
            \node at (1.5,-1) (a3) {\begin{tabular}{c} (IN,OUT, \\ OUT,OUT) \end{tabular}};
            \node at (0,-2) (a4) {\begin{tabular}{c} (IN,IN, \\ OUT,OUT) \end{tabular}};
            \draw (a1) -- (a2);
            \draw (a1) -- (a3);
            \draw (a2) -- (a4);
            \draw (a3) -- (a4);
        \end{tikzpicture}
    }
    \caption{The lattices induced by \Cref{algorithm:ds-ll} on the graph $G_4$ described in \Cref{example:fullylatticelinear}.}
    \label{figure:full-lattices-from-ds-example}
\end{figure}
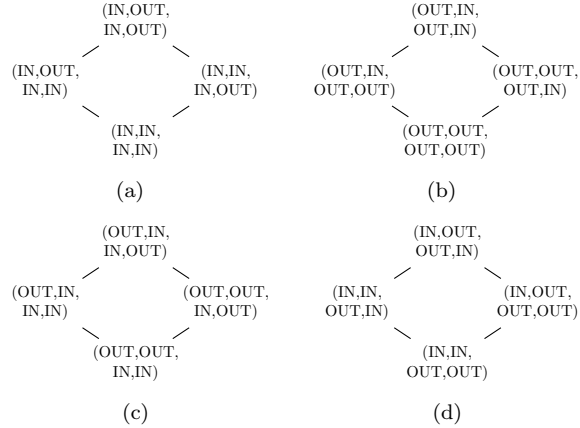

\noindent\textbf{\textit{Remark}}: The lattices in \Cref{figure:full-lattices-from-ds-example} present the skeleton of the partial order among the states. Here, in a non-optimal state, if only one forbidden node moves, then the resulting state $s'$ is a parent of $s$. If more than one nodes move, then the resulting state $s'$ is not a parent of $s$, but is reachable from $s$ through the lattice.

\section{Convergence Time in Traversing a Lattice of States}\label{section:convergence-time}

\begin{theorem}\label{theorem:general-convergence-time}
    Given an LLTS on $n$ processes, with the domain of size not more than $m$ for each process, the acting algorithm will converge in $n\times (m-1)$ moves.
\end{theorem}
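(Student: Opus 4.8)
The plan is to charge every move to the single process that makes it, show that each process can move only boundedly often, and then sum over all processes. First I would recall from \Cref{definition:<-lattice} that in an LLTS every transition $(s,a,s')\in\delta$ alters the local state of exactly one process $j$, and does so by strictly increasing it ($s[j]<s'[j]$, while $s[k]=s'[k]$ for all $k\neq j$). When several forbidden nodes act together — the situation described in the remark following \Cref{definition:<-lattice} — that compound step decomposes into a sequence of such single-node moves, so it suffices to bound the number of single-node moves in any execution.

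Next I would invoke the fact from \Cref{section:preliminaries} that local states are totally ordered. Hence the domain of process $j$, which has size at most $m$, is a chain of at most $m$ elements, and since each move of $j$ sends its local state strictly higher in this chain, $j$ can be the acting process at most $m-1$ times over the whole execution: once its local state reaches the top of its chain, no further move can increase it, so it can never move again. Summing over the $n$ processes, the execution contains at most $n\times(m-1)$ moves.

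To finish, I would argue that the algorithm halts only at a supremum: by part~(2) of \Cref{definition:<-lattice}, every state that is not a supremum has an outgoing transition, so the algorithm cannot terminate before reaching an optimal (supremum) state; combined with the move count, it must reach one within $n(m-1)$ moves, and this is independent of the initial state in the lattice.

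The only delicate point is bookkeeping: making precise what counts as a ``move'' under an asynchronous or partially synchronous schedule, so that each move is attributed to exactly one process and strictly advances that process along its totally ordered domain, and noting that a process deliberately taking a step of more than minimum magnitude (again mentioned in the remark) merely skips elements of its chain and can only lower the count. Once that setup is in place the counting argument is immediate, and I do not expect any genuine obstacle.
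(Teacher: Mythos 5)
Your proposal is correct and rests on the same key fact as the paper's proof: each move strictly advances exactly one process's local state through a totally ordered domain of at most $m$ elements, so no process can move more than $m-1$ times. The paper phrases this as a proof by contradiction (more than $n(m-1)$ moves would, by pigeonhole, force some process to revisit a local state, i.e., take a forbidden ``step back''), whereas you count directly, but the two arguments are the same in substance.
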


\begin{proof}
    Assume for contradiction that the underlying algorithm converges in $x\geq n\times (m-1)+1$ moves. This implies, by pigeonhole principle, that at least one of the nodes $i$ is revising their states $st.i$ after changing to $st'.i$. If $st.i$ to $st'.i$ is a step ahead transition for $i$, then $st'.i$ to $st.i$ is a step back transition for $i$ and vice versa. For a system containing a lattice linear state space, we obtain a contradiction since step back actions are absent in such systems.
\end{proof}

\begin{corollary}\label{corollary:convergence-time-ll-multivariable}
    Consider the case where the nodes have multiple variables. 
    Furthermore, in each node, atmost $r$ of these variables,  $var_1.i,...,var_r.i$ (with domain sizes $m_1',...m_r'$ respectively) contribute independently to the construction of the lattice. 
    Then the LLTS will converge in $n\times \Big(\sum\limits_{j=1}^r (m_j'-1)\Big)$ moves.
    \qed 
\end{corollary}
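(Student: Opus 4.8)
The plan is to reduce the multivariable case to the single-variable case already handled in Theorem \ref{theorem:general-convergence-time}, by treating each lattice-contributing variable as if it were its own "virtual process." First I would observe that the hypothesis is precisely that the $r$ variables $var_1.i,\dots,var_r.i$ at each node $i$ move monotonically along the partial order that defines the lattice (this is the content of "contribute independently to the construction of the lattice"): each such variable, once advanced one step up in its total order, never steps back, by the same lattice-linearity argument used in Theorem \ref{theorem:general-convergence-time}. Any variable that does not contribute to the lattice is irrelevant to convergence and can be ignored, since the algorithm terminates once the supremum (in the lattice-relevant coordinates) is reached.

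The key steps, in order: (1) For a fixed node $i$, the variable $var_j.i$ ranges over a totally ordered domain of size $m_j'$, so it can be advanced at most $m_j' - 1$ times before reaching its top value; summing over $j$, the total number of step-ahead transitions localized at node $i$ is at most $\sum_{j=1}^r (m_j'-1)$. (2) Each move of the algorithm is made by some single node $i$ updating one or more of its variables; but — and here I would lean on the lattice structure — any move strictly increases the global state in the partial order, hence advances at least one lattice-contributing variable at the acting node. (3) Assume for contradiction that the algorithm runs for more than $n \cdot \sum_{j=1}^r(m_j'-1)$ moves. By the pigeonhole principle, some node $i$ has contributed more than $\sum_{j=1}^r(m_j'-1)$ advancing moves, which forces, again by pigeonhole across its $r$ variables, some variable $var_j.i$ to be advanced more than $m_j'-1$ times. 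Since its domain has only $m_j'$ values and it moves monotonically, this requires a step-back transition in $var_j.i$, contradicting lattice linearity exactly as in the proof of Theorem \ref{theorem:general-convergence-time}. Therefore convergence occurs within $n \cdot \big(\sum_{j=1}^r (m_j'-1)\big)$ moves.

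The main obstacle I anticipate is step (2): pinning down exactly what "a move" counts as when a node may update several of its variables simultaneously. If a single move is allowed to advance several of node $i$'s variables at once, then the count $\sum_{j=1}^r (m_j'-1)$ still bounds the number of *advances* of variable $j$ but may over- or under-count *moves*; one must argue that the bound is on the total number of variable-advances (of which there are at most $n\sum_j(m_j'-1)$) and that each move consumes at least one such advance, so the move count is also bounded by $n\sum_j(m_j'-1)$. The phrase "contribute independently" is doing real work here — it is what lets us treat the $r$ variable-slots at node $i$ as $r$ independent monotone counters rather than as a single coupled object — and I would make that independence assumption explicit at the start of the proof rather than leave it implicit.
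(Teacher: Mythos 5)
Your proof is correct and is essentially the argument the paper intends: the corollary is stated with the proof omitted as an immediate extension of Theorem~\ref{theorem:general-convergence-time}, and your reduction — each lattice-contributing variable is a monotone counter admitting at most $m_j'-1$ advances, each move consumes at least one advance, and pigeonhole yields the bound — is exactly that extension, with the useful added care about what counts as a single move when several variables change at once.
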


\begin{corollary}
    (From \Cref{theorem:general-convergence-time} and \Cref{theorem:ds-ll}) \Cref{algorithm:ds-ll} converges in $n$ moves.
\end{corollary}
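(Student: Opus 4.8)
The plan is to read off the corollary as a direct instantiation of \Cref{theorem:general-convergence-time}, with the role of the two cited results being to check that its hypotheses hold. First I would recall the data structure of \mds as fixed in \Cref{definition:mds}: every node $i$ carries exactly one variable $st.i$, whose domain $\{IN,OUT\}$ has size $m=2$. Hence the per-process domain bound needed by \Cref{theorem:general-convergence-time} is $m=2$, and the only remaining question is whether the transition system induced by \Cref{algorithm:ds-ll} is an LLTS in the sense required by that theorem.

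Second, I would invoke \Cref{theorem:ds-ll}, which establishes that \Cref{algorithm:ds-ll} is (fully) lattice linear: the state space partitions into lattices $S_1,\dots,S_w$, the initial state locks execution into one of them, and within that lattice the rank $\textsc{Rank-DS}$ strictly decreases along every transition. Together with \Cref{lemma:ds-no-step-back} — no node revisits an older local state — this is exactly the "no step-back move" property on which the pigeonhole argument in the proof of \Cref{theorem:general-convergence-time} relies. So each $S_j$ is an LLTS on $n$ processes with per-process domain size at most $2$, and the theorem applies verbatim inside whichever lattice the execution is confined to.

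Third, I would simply substitute: \Cref{theorem:general-convergence-time} then gives a convergence bound of $n\times(m-1)=n\times(2-1)=n$ moves, which is the claimed statement. (If one prefers, the same count follows from \Cref{corollary:convergence-time-ll-multivariable} with $r=1$ and $m_1'=2$.)

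I do not expect a genuine obstacle here; the entire content is the observation $m=2$. The only point that deserves a sentence of care is the one flagged above — namely that "fully lattice linear" must be matched against the single-lattice hypothesis of \Cref{theorem:general-convergence-time}, so I would explicitly note that the $n$-move bound holds within each constituent lattice $S_j$ and therefore for any execution of the algorithm regardless of the initial state.
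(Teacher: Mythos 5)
Your proposal is correct and matches the paper's intent exactly: the corollary is a direct instantiation of \Cref{theorem:general-convergence-time} with $m=2$ (the domain $\{IN,OUT\}$ of $st.i$), with \Cref{theorem:ds-ll} supplying the LLTS hypothesis, yielding $n\times(2-1)=n$ moves. Your extra remark that the bound applies within whichever lattice $S_j$ the initial state locks into is a sensible clarification the paper leaves implicit, but it does not change the argument.
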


\noindent\textbf{\textit{Remark}}: In \cite{Gupta2021}, we presented an eventually lattice linear algorithm for the service demand based minimal dominating set problem, a more generalized version of the dominating set problem. That algorithm, along with the algorithms for minimal vertex cover, maximal independent set and graph colouring present in that paper, induces a partial order only among the feasible states. It reaches a feasible state aggressively where it locks in one of the lattices, and then it takes the system to an optimal state while traversing that lattice. Each of these two phases take $n$ moves. Thus that algorithm converges in $2n$ moves, which is shown in \cite{Gupta2021}.
Based on that algorithm, an algorithm for the standard dominating set problem (\Cref{definition:mds}) can be obtained, which will have the same properties. The lattices that it would induce will be similar to those drawn in \Cref{figure:full-lattices-from-ds-example}, except that non-feasible states will not be present in those lattices.

\section{Related Work}\label{section:literature}

\textbf{Lattice theory}: Lattice linear problems are studied in \cite{Garg2020}. In \cite{Gupta2021}, we have extended the theory presented in \cite{Garg2020} to develop eventually lattice linear self-stabilizing algorithms for some non-lattice linear problems. Such algorithms impose a lattice among the feasible states of the state space.

In this paper, we present a (fully) lattice linear algorithm for maximal dominating set, which imposes a partial order among all states and converges faster.

\noindent\textbf{Dominating set}: Self-stabilizing algorithms for the minimal dominating set problem are proposed in \cite{Turau2007,GODDARD2008,Chiu2014,Xu2003,Hedetniemi2003}. The best convergence time among these works is $4n$ moves. The algorithm presented in \cite{Gupta2021}, takes $2n$ moves to converge.

In this paper, the fully lattice linear algorithm that we present converges in $n$ moves and is fully tolerant to consistency violations. This is an improvement as compared to the results presented in the literature.

\section{Conclusion}\label{section:conclusion}

In this paper, we study the differences between the structure of partial order induced in lattice linear problems and non-lattice linear problems. We present a fully lattice linear self-stabilizing algorithm for the minimal dominating set. This is the first lattice linear algorithm for a non-lattice linear problem. This algorithm converges in $n$ moves.
We provide upper bounds to the convergence time for an algorithm traversing an arbitrary lattice linear state space.

It is still an open question whether fully lattice linear algorithms for minimal vertex cover and maximal independent set problems can be developed.

\bibliography{pvll.bib}

\begin{thebibliography}{1}

\bibitem{Chiu2014}
{\sc Chiu, W.~Y., Chen, C., and Tsai, S.-Y.}
\newblock A 4n-move self-stabilizing algorithm for the minimal dominating set
  problem using an unfair distributed daemon.
\newblock {\em Information Processing Letters 114}, 10 (2014), 515--518.

\bibitem{Dijkstra1974}
{\sc Dijkstra, E.~W.}
\newblock Self-stabilizing systems in spite of distributed control.
\newblock {\em Commun. ACM 17}, 11 (nov 1974), 643–644.

\bibitem{Garg2020}
{\sc Garg, V.~K.}
\newblock {\em Predicate Detection to Solve Combinatorial Optimization
  Problems}.
\newblock Association for Computing Machinery, New York, NY, USA, 2020,
  p.~235–245.

\bibitem{GODDARD2008}
{\sc Goddard, W., Hedetniemi, S.~T., Jacobs, D.~P., Srimani, P.~K., and Xu, Z.}
\newblock Self-stabilizing graph protocols.
\newblock {\em Parallel Processing Letters 18}, 01 (2008), 189--199.

\bibitem{Gupta2021}
{\sc Gupta, A.~T., and Kulkarni, S.~S.}
\newblock Extending lattice linearity for self-stabilizing algorithms.
\newblock In {\em Stabilization, Safety, and Security of Distributed Systems\/}
  (Cham, 2021), C.~Johnen, E.~M. Schiller, and S.~Schmid, Eds., Springer
  International Publishing, pp.~365--379.

\bibitem{Hedetniemi2003}
{\sc Hedetniemi, S., Hedetniemi, S., Jacobs, D., and Srimani, P.}
\newblock Self-stabilizing algorithms for minimal dominating sets and maximal
  independent sets.
\newblock {\em Computers \& Mathematics with Applications 46}, 5 (2003),
  805--811.

\bibitem{Turau2007}
{\sc Turau, V.}
\newblock Linear self-stabilizing algorithms for the independent and dominating
  set problems using an unfair distributed scheduler.
\newblock {\em Information Processing Letters 103}, 3 (2007), 88--93.

\bibitem{Xu2003}
{\sc Xu, Z., Hedetniemi, S.~T., Goddard, W., and Srimani, P.~K.}
\newblock A synchronous self-stabilizing minimal domination protocol in an
  arbitrary network graph.
\newblock In {\em Distributed Computing - IWDC 2003\/} (Berlin, Heidelberg,
  2003), S.~R. Das and S.~K. Das, Eds., Springer Berlin Heidelberg, pp.~26--32.

\end{thebibliography}
\bibliographystyle{acm}
\end{document}